\definecolor{myurlcolor}{rgb}{0,.5,.5}
\definecolor{mycitecolor}{rgb}{0,.6,0}
\definecolor{myrefcolor}{rgb}{2,0,0}
\newcommand*{\addFileDependency}[1]{
  \typeout{(#1)}
  \@addtofilelist{#1}
  \IfFileExists{#1}{}{\typeout{No file #1.}}
}
\newcommand*{\myexternaldocument}[1]{
    \externaldocument{#1}
    \addFileDependency{#1.tex}
    \addFileDependency{#1.aux}
}
\newcommand{\beq}[0]{\begin{equation}}
\newcommand{\eeq}[0]{\end{equation}}
\newcommand{\one}{\leavevmode\hbox{\small1\normalsize\kern-.33em1}}
\def\be{\begin{equation}}
\def\ee{\end{equation}}
\def\ben{\begin{eqnarray}}
\def\een{\end{eqnarray}}
\def\eea{\end{array}}
\def\bea{\begin{array}}
\newcommand{\Tr}[1]{\mathrm{Tr}#1}
\newcommand{\bei}{\begin{itemize}}
\newcommand{\eei}{\end{itemize}}
\newcommand{\ket}[1]{|#1\rangle}
\newcommand{\bra}[1]{\langle#1|}
\newcommand{\proj}[1]{\ket{#1}\!\!\bra{#1}}
\newcommand{\I}{\mathbbm{1}}
\renewcommand{\emph}[1]{\textbf{#1}}
\newtheorem*{rep@theorem}{\rep@title}
\newcommand{\newreptheorem}[2]{%
\newenvironment{rep#1}[1]{%
 \def\rep@title{#2 \ref{##1}}%
 \begin{rep@theorem}}%
 {\end{rep@theorem}}}
\theoremstyle{plain}
\newtheorem{thm}{Theorem}
\newtheorem*{thm*}{Theorem}
\newtheorem{fakt}{Fact}
\newtheorem{defn}[thm]{Definition}
\theoremstyle{definition}
\theoremstyle{remark}
\begin{document}

\title{Detecting entanglement in any measurement using quantum networks}
\author{Shubhayan Sarkar}
\email{shubhayan.sarkar@ulb.be}
\affiliation{Laboratoire d’Information Quantique, Université libre de Bruxelles (ULB), Av. F. D. Roosevelt 50, 1050 Bruxelles, Belgium}

\begin{abstract}	
Entanglement is a key resource to demonstrate quantum advantage over classical strategies. Entanglement in quantum states is one of the most well-explored areas in quantum physics. However, a rigorous approach to understanding and detecting entanglement in composite quantum measurements is lacking. In this work, we focus on composite quantum measurements and classify them into two classes: entangled and separable measurements. As done for quantum states, we define analogously a notion of witness that can be used to detect entanglement in composite quantum measurements. Here, one does not need to trust the measurement to witness its entanglement but must trust the quantum states. We then further extend this approach to show that any entangled measurement provides an advantage in network quantum steering without inputs, also known as swap steering. Consequently, this provides a way to witness entanglement in any quantum measurement in a one-sided device-independent way. Finally, we consider the star network scenario and show that any rank-one projective entangled quantum measurement gives a quantum advantage. Thus, one can detect the entanglement in any rank-one projective measurement in a device-independent way.
\end{abstract}


\maketitle

{\it{Introduction---}} Entanglement is a fundamental resource in quantum mechanics that enables the demonstration of a quantum advantage over classical strategies. Entanglement in quantum states has been extensively studied in the past decades \cite{Nielsen_Chuang_2010, QUANTUMENT, Guhne_2009}. On the other hand, entanglement in composite or joint quantum measurements have not been well-explored even in the simplest scenario when they act on two subsystems \cite{isoent}. Despite their importance in several quantum information protocols such as
entanglement swapping \cite{swap}, teleportation \cite{TELEPORTATION}, and dense coding
\cite{DENSECODING}, there is neither a systematic characterization of such measurements nor a useful way to detect them in experiments. As a matter of fact, there are projective entangled measurements that possess highly unintuitive features such as the measurement elements despite being orthogonal cannot be perfectly distinguished using only local operations and classical communication, also known as unextentible product bases \cite{UPB}. Moreover, there is an indication that entangled measurements are crucial if one wants to identify correlations that separate quantum theory from other generalised probabilistic theories \cite{GPT1}. Consequently, understanding entangled measurements has been posed as one of the important problems in quantum foundations that deserve extensive exploration
\cite{Cavalcanti2023}.

Entangled measurements are also desirable from a practical perspective. The vision of quantum internet is incomplete without entangled measurements. For instance, to establish quantum correlations over long distances, intermediate nodes equipped with quantum repeaters will be necessary \cite{repeater}. These repeaters perform joint measurements on the received systems to facilitate entanglement distribution. Also, the recently contrived ideas in quantum networks that do not require inputs to witness nonclassicality require entangled measurements, for instance, Refs. \cite{renou1, renou3, Tavakoli_2014, netstee,Sarkar2024networkquantum} to name a few. Moreover, they are also useful in self-testing any pure entangled state \cite{Supic2023} and any measurement along with mixed states \cite{sarkar2024universal}. They have also been implemented in numerous experiments, for instance in Refs. \cite{entexp1, entexp2, entexp3, entexp4, entexp5} to name a few. 

In this work, we delve into the study of composite quantum measurements and propose the simplest classification that divides them into two distinct categories: entangled and separable measurements. For a note, a classification of entangled measurements based on the cost of localisation restricted to the bipartite case has been previously considered in \cite{pauwels2024}. Our classification mirrors the approach used for quantum states, where we introduce a concept analogous to the notion of entanglement witnesses. These witnesses serve as tools to detect the presence of entanglement within any composite quantum measurements. Importantly, in this context, detecting entanglement does not require trusting the measurement process itself but instead relies on trusting the quantum states involved. 

Building on this framework, we extend our approach to demonstrate a significant application: any entangled measurement provides an advantage in network quantum steering without inputs, also referred to as swap steering \cite{Sarkar2024networkquantum}. This observation offers a novel perspective by establishing a method to witness the entanglement in any quantum measurement in a one-sided device-independent manner. This means that while one side of the system requires trust, the other side on which the entangled measurement acts remains untrusted. Finally, we consider the star network scenario and show that any rank-one projective entangled quantum measurement exhibits a quantum advantage. This further enables one to detect entanglement in rank-one projective composite measurements in a device-independent manner, where none of the experiment's devices need to be trusted. Moreover, in the case of measurements acting on two systems, the constructed witness is universal, that is, every rank-one bipartite projective entangled measurement can be detected using a single witness. 

{\it{Quantum measurement tomography---}} We begin by describing the tomography of quantum measurements. 
Consider a quantum measurement $\{M_i\}$ where $M_i$ denotes the measurement elements such that they are positive semi-definite and $\sum_iM_i=\I$. For projective measurements, an additional condition is $M_iM_j=M_i\delta_{ij}$. Considering a set of informational complete density matrices $\rho_j$ that are known (trusted), one can reconstruct the measurement by evaluating the probabilities, also referred to as correlations, $p_{ij}=\Tr(M_i\rho_j)$. 
Here one additionally assumes the dimension of the Hilbert space on which the measurements act. A broader class of tomography is known as process tomography where one considers tomography of quantum channels and has to trust the input states along with measurements.

At most times, it will not be required to reconstruct the complete measurement but only to confirm some relevant properties about it. 
Here we are interested in the entanglement in composite quantum measurements and thus classify them into two classes:

\begin{defn}[Separable measurements] \label{defsep} Consider the measurement $M_i$ such that all the measurement elements $\{M_i\}$ are separable, that is, $M_i\propto \sum_j\sigma_{i,j}\otimes\sigma'_{i,j}\otimes\ldots\ \forall i$, then  $\{M_i\}$ is a separable measurement. 
\end{defn}
Using the above definition, we define entangled measurements.
\begin{defn}[Entangled measurements] \label{defent} Consider the measurement $\{E_i\}$ such that at least one of the  measurement elements $E_i$ is not separable, then  $\{E_i\}$ is an entangled measurement. 
\end{defn}

Let us now construct witnesses that can detect whether a given composite measurement is separable or entangled.  

{\it{Witnesses---}} It is well-known that due to Hahn-Banach theorem, for every entangled state $\rho_e$ there exists an entanglement witness $\mathcal{W}_{\rho_e}$ such that $\Tr(\mathcal{W}_{\rho_e}\rho_e)<0$ and $\Tr(\mathcal{W}_{\rho_e}\sigma)\geq0$ where $\sigma$ represents any separable state. Similarly, we can find a general result to witness entanglement in quantum measurements that straightaway follows from the Hahn-Banach theorem.
\begin{fakt} For any entangled measurement $\{E_i\}$ def. \ref{defent}, there exists an entanglement witness $\mathcal{W}$ such that 
\begin{equation}\label{eq1}
   \min_i\Tr(\mathcal{W}E_i)<0,\quad \min_i\Tr(\mathcal{W}M_i)\geq0
\end{equation}
for every separable measurement $\{M_i\}$.
\end{fakt}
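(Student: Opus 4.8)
\textit{Proof idea.} The plan is to deduce the statement from the standard existence theorem for entanglement witnesses of states, by isolating a single ``offending'' measurement element. Since $\{E_i\}$ is entangled in the sense of Def.~\ref{defent}, I would first fix an index $i_0$ such that $E_{i_0}$ is not separable. Because the zero operator is (trivially) separable, $E_{i_0}\neq 0$, hence $\Tr(E_{i_0})>0$ and $\rho_e:=E_{i_0}/\Tr(E_{i_0})$ is a legitimate density operator; by construction it is an entangled state.

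Next I would work in the real vector space of Hermitian operators on the underlying finite-dimensional tensor-product Hilbert space, equipped with the Hilbert-Schmidt inner product $\langle A,B\rangle=\Tr(AB)$. The set $\mathrm{SEP}$ of separable positive semi-definite operators is a closed convex cone, and $\rho_e\notin\mathrm{SEP}$, so the Hahn-Banach separation theorem supplies a Hermitian $\mathcal{W}$ with $\Tr(\mathcal{W}\rho_e)<0$ and $\Tr(\mathcal{W}\sigma)\geq 0$ for all $\sigma\in\mathrm{SEP}$; this $\mathcal{W}$ is precisely a standard entanglement witness for $\rho_e$. The two inequalities in~\eqref{eq1} then follow immediately: $\Tr(\mathcal{W}E_{i_0})=\Tr(E_{i_0})\,\Tr(\mathcal{W}\rho_e)<0$ gives $\min_i\Tr(\mathcal{W}E_i)<0$, while for any separable measurement $\{M_i\}$ each element $M_i$ lies in $\mathrm{SEP}$ (it is separable and positive semi-definite, hence a non-negative multiple of a separable state), so $\Tr(\mathcal{W}M_i)\geq0$ for all $i$ and therefore $\min_i\Tr(\mathcal{W}M_i)\geq0$.

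The step needing the most care is the invocation of Hahn-Banach: one must specify the correct ambient real space, check that $\mathrm{SEP}$ is indeed closed (true in finite dimension) and convex so that a point outside it can be strictly separated from it, and use the fact that $\mathrm{SEP}$ is a cone -- which upgrades ``bounded below on $\mathrm{SEP}$'' to ``non-negative on $\mathrm{SEP}$''. Everything else is routine bookkeeping. It is worth noting that the witness produced depends only on the single non-separable element $E_{i_0}$ and not on the remaining structure of the measurement.
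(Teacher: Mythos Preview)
Your proposal is correct and follows essentially the same approach as the paper: pick one entangled element $E_{i_0}$, invoke the standard state-level entanglement witness guaranteed by Hahn--Banach separation, and read off both inequalities in~\eqref{eq1}. You supply more detail (normalisation of $E_{i_0}$, closedness/convexity/cone structure of $\mathrm{SEP}$) than the paper, which simply quotes the existence of $\mathcal{W}_{E_k}$ as known, but the argument is the same.
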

\begin{proof}
    Consider that the element $E_k$ of $\{E_i\}$ is entangled. Considering the witness $\mathcal{W}_{E_k}$, we have that $\Tr(\mathcal{W}_{E_k}E_k)<0$ and $\Tr(\mathcal{W}_{E_k}\sigma)\geq0$ for every separable state $\sigma$. As witnesses exist for every entangled state, thus every entangled measurement can be witnessed using the construction \eqref{eq1}.
\end{proof}

\begin{figure}[t]
\begin{center}
\includegraphics[width=.8\linewidth]{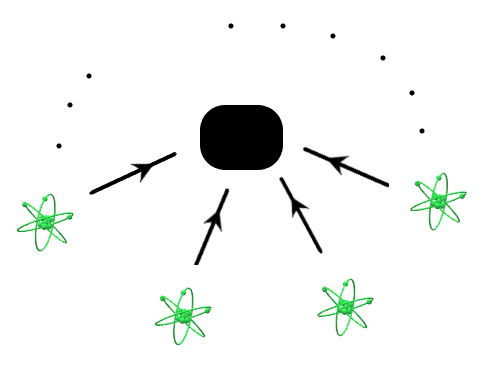}
    \caption{Witnessing entanglement in measurements. To detect entanglement in the measurement acting on $N$ subsystems, one needs to consider $N$ independent trusted sources that generate local states and send them to the untrusted measurement. From the obtained statistics, one can infer whether the measurement has entanglement or not.}
    \label{fig1}
    \end{center}
\end{figure}
We do not delve into the details of further construction of the witnesses of entanglement in measurements as it follows directly from constructing witnesses of entangled states which has been extensively studied \cite{QUANTUMENT, Guhne_2009}. We focus here on the experimental implementation of these witnesses. As an example, let us consider the Bell-basis measurement (BM) $\{\proj{\phi_{+}},\proj{\phi_{-}},\proj{\psi_{+}},\proj{\psi_{-}}\}$ where 
\begin{equation}\label{Amea1}
    \ket{\phi_{\pm}}=\frac{1}{\sqrt{2}}\left(\ket{00}\pm\ket{11}\right),\quad
    \ket{\psi_{\pm}}=\frac{1}{\sqrt{2}}\left(\ket{01}\pm\ket{10}\right).
\end{equation}
Now, to detect entanglement in Bell-basis measurement, a possible witness $\mathcal{W}_{BM}$ is given by $\mathcal{W}_{BM}=1/2\I-\proj{\phi_+}$. A way to implement this witness is to prepare the state $\ket{\phi_+}$ and act it on the measurement and if one satisfies the criterion \eqref{eq1}, then the measurement is confirmed to be entangled. However, entangled states is a costly resource and preparing particular entangled states is even more difficult. Thus, it is beneficial to express this witness in terms of local states which is much simpler to generate [see Fig. \ref{fig1}]. 

Any witness $\mathcal{W}$ can be decomposed into the form 
\begin{equation}
    \mathcal{W}=\sum_{ij\ldots} c_{ij\ldots}\proj{a_i}\otimes\proj{b_j}\otimes\ldots
\end{equation}
where $\proj{a_i},\proj{b_j},\ldots$ are positive semi-definite for any $i,j,\ldots$. Normalising and absorbing the factors into $c_{ij\ldots}$, we can safely conclude that they are valid density matrices. However, this might not be optimal in the sense that one might be required to measure a large number of correlations to conclude that the measurement is entangled. For instance, the witness $\mathcal{W}_{BM}$ decomposes as
\begin{equation}
    \mathcal{W}_{BM}=\frac{1}{4}\left(\I-\sum_{i=0}^2\sum_{j,j'=0,1}(-1)^{j+j'}\proj{\mu_{i,j}}\otimes\proj{\mu_{i,j'}}\right)
\end{equation}
where $\ket{\mu_{0,j}}=\ket{j},\ket{\mu_{1,j}}=1/\sqrt{2}(\ket{0}+(-1)^j\ket{1}),$ and $\ket{\mu_{2,j}}=1/\sqrt{2}(\ket{0}+i(-1)^j\ket{1})$. Thus, at least $12$ correlations must be considered to evaluate whether the BM is entangled. 

As it turns out, the number of correlations to observe in the particular case of BM can be significantly reduced. Consider the following witness 
\begin{eqnarray}
     \mathcal{W'}_{BM}=\frac{3}{2}\I-\sum_{i,j=0,1}\proj{\mu_{i,j}}\otimes\proj{\mu_{i,j}}.
\end{eqnarray}
It is simple to check that $\Tr(\mathcal{W'}_{BM}\sigma)\geq0$ for any separable state $\sigma$ and $\Tr(\mathcal{W'}_{BM}\proj{\phi_+})=-1/2$. Moreover one has to measure only $4$ correlations now. Let us remark here we do not have a prove that the witness $\mathcal{W'}_{BM}$ is optimal. However, based on our extensive search this seems to be the case. Thus, we conjecture that the above witness $\mathcal{W'}_{BM}$ is an optimal witness of the BM. Similarly, one can find different optimal witnesses tailored to specific quantum measurements that can be implemented using local quantum states.

{\it{One-sided device-independent witness---}} Let us now proceed towards relaxing the trust in the input states to show that a given measurement is entangled. For this purpose, we utilise the recently contrived swap-steering scenario, which is the minimal scenario to detect any form of network nonlocality without inputs \cite{Sarkar2024networkquantum}. The swap-steering scenario consists of two parties namely, Alice and Bob in two different labs. Both of them receive subsystems from $N$ independent sources $S_j$ that prepare the states $\rho_{A_jB_j}$ for $j=1,\ldots,N$. Here $A_j, B_j$ denote the $N$ different subsystems of Alice and Bob respectively. Bob performs a single measurement $\{E_b\}$ on the received subsystems where $b$ denotes his outcomes such that $E_b$ acts on $\bigotimes_N\mathbb{C}^d$. Alice is trusted in this context, meaning the measurements she performs on her subsystems are well-defined and known. Here, we consider the measurements of the trusted party is $\mathcal{A}_{i_1\ldots i_N}=\{\tau_{i_1}\otimes\ldots\tau_{i_N},\I-\tau_{i_1}\otimes\ldots\tau_{i_N}\}$ where $\tau_{i_j}\in \mathbb{C}^{d}$ are projectors such that $\{\tau_{i_j}\}$ span the Hilbert spaces $\mathbb{C}^{d}$ and thus are tomographically complete. Consequently, we have that $i_j=\{1,\ldots,d^2\}$ for $j=1,\ldots,N$. As we trust one of the sides in the experiment, in general such scenarios are referred to as one-sided device-independent (1SDI) (see Fig. \ref{fig2}). 

Alice and Bob repeat the experiment enough times to obtain the correlations $\vec{p}=\{p(a,b|i_1\ldots i_N)\}$ where $p(a,b|i_1\ldots i_N)$ denotes the probability of obtaining outcome $a,b$ with Alice and Bob respectively given Alice's input $i_1\ldots i_N$. These probabilities can be computed in quantum theory as
\begin{equation}
p(a,b|i_1\ldots i_N)=\Tr\left[(M_{a|i_1\ldots i_N}\otimes E_b)\bigotimes_N\rho_{A_iB_i}\right]
\end{equation}
where $M_{a|x}$ denote the measurement elements of Alice corresponding to input $x$. 
It is important to recall that Alice and Bob can not communicate with each other during the experiment. 

If the correlations $\vec{p}$ admit a separable outcome-independent hidden state (SOHS) model \cite{Sarkar2024networkquantum, sarkar2024witnessingnetworksteerabilitybipartite}, then $p(a,b|i_1\ldots i_N)$ is given by
\begin{equation}\label{sohs}
\sum_{\lambda_1,\ldots,\lambda_N}\prod_{i=1}^Np(\lambda_i)\Tr{[M_{a|i_1\ldots i_N}\rho_{\lambda_1}\otimes\ldots\rho_{\lambda_N}]} p(b|\lambda_1,\ldots,\lambda_N).
\end{equation}
This assumption can be further weakened by considering classically correlated sources \cite{Sarkar2024networkquantum}, however, for simplicity, we consider here that all sources are independent.

\begin{figure}[t]
\begin{center}
\includegraphics[width=.8\linewidth]{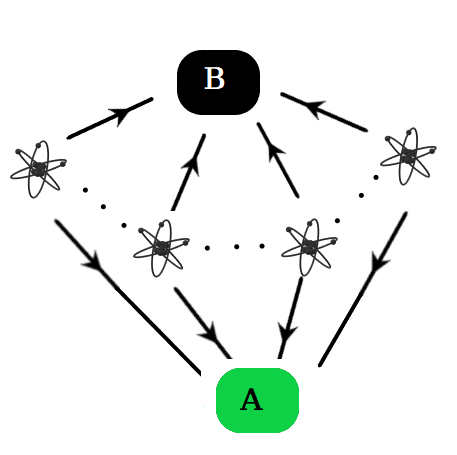}
    \caption{Swap-steering scenario. Alice and Bob are spatially separated and each of them receives $N$ subsystems from the untrusted sources. On the received subsystem Bob performs a single measurement while Alice being trusted performs a tomography on her subsystem. From the experiment, they obtain the correlations $\{p(a,b|i_1\ldots i_N)\}$.}
    \label{fig2}
    \end{center}
\end{figure}

Consider now an entangled measurement $\{E_b\}$ whose entanglement witness is given by $\mathcal{W}$ such that it satisfies the criterion \eqref{eq1}. The operator $\mathcal{W}$ can be expressed using the tomographically complete set of operators stated above as $\mathcal{W}=-\sum_{i_1,\ldots i_N=1}^{d^2}\beta_{i_1,\ldots i_N}\tau_{i_1}\otimes\ldots\tau_{i_N}$. Utilising this entanglement witness, let us now propose the following swap-steering inequalities 
\begin{eqnarray}\label{Wituniv}
    \mathcal{S}_{\{E_b\}}=\max_{b}\sum_{i_1,\ldots i_N}\beta_{i_1,\ldots i_N}p(0,b|i_1\ldots i_N)
\end{eqnarray}
where $p(0,b|i_1\ldots i_N)$ is the probability of obtaining outcome $0$ by trusted Alice given the input $i_1\ldots i_N$ and outcome $b$ by Bob.
Let us now find its SOHS bound.
\begin{fakt}
    Consider the swap-steering scenario in Fig. \ref{fig2} and the functional $\mathcal{S}_{\{E_b\}}$ \eqref{Wituniv}. The maximal value attainable of $\mathcal{S}_{\{E_b\}}$ using an SOHS model is $\beta_{SOHS}=0$. 
\end{fakt}

The proof of the above fact is in the Appendix.
Let us now consider that all the sources produce the maximally entangled state of dimension $d$, that is,  $\ket{\psi_{A_i,B_i}}=\ket{\phi^+_d}=1/\sqrt{d}(\sum_{i}\ket{ii})$ with Bob performing the measurement $\{E_b\}$. Let us say that Bob obtains the outcome $b$. Due to entanglement swapping, the post-measurement state at Alice is $E_b$ with a probability $\Tr(E_b)/d^{N}$ and thus one is left with the expression
\begin{eqnarray}
    \mathcal{S}_{\{E_b\}}&=&\max_{b}\frac{\Tr(E_b)}{d^N}\sum_{i_1,\ldots i_N}\beta_{i_1,\ldots i_N}\Tr(\tau_{i_1}\otimes\ldots\tau_{i_N} E_b)\nonumber\\&=&\max_{b}\frac{\Tr(E_b)}{d^N}\Tr(-\mathcal{W}E_b)>0.
\end{eqnarray} 
Consequently, we can conclude that every entangled measurement generates swap-steerable correlations and thus entanglement in any quantum measurement can be witnessed in a 1SDI way. For simplicity, we considered that the dimension of the local subsystems on which $\{E_b\}$ acts is the same. However, all the above arguments can straightaway be generalised to the case when they are different by considering that the source generates the maximally entangled state of the corresponding dimension of the local subsystem and Alice performs tomography on it. 

{\it{Device-independent witness---}} Let us now present a device-independent way to witness the entanglement in projective measurements. For this purpose, we consider the star network scenario \cite{Tavakoli_2014, sarkar2024universal} which is a generalisation of the bilocality scenario introduced in \cite{pironio1}. The star network is composed of an external group of $N$ Alices, $A_i$ $(i=1,\ldots, N)$, and a central party Bob, B. There are $N$ independent sources which distribute bipartite quantum states $\rho_{A_iB_i}$. Here, the subsystems $A_i$ go to the external parties on which they could perform local measurements, while all the subsystems $B_i$ go to Bob on which he can perform a joint measurement. The inputs and outputs of all the external parties are denoted as $x_i,a_i$ respectively with Bob performing a single measurement with outputs denoted as $b$. All the $N+1$ parties are spatially separated and can not communicate with each other (see Fig. \ref{fig3}).  

The experiment is repeated enough number of times to obtain the correlations $\vec{p}=\{p(\mathbf{a},b|\mathbf{x})\}$ where each $p(\mathbf{a},b|\mathbf{x})$ is the probability of obtaining outcomes $\mathbf{a}\equiv a_1\ldots a_N$ by all the Alices, given inputs $\mathbf{x}\equiv x_1,\ldots,x_N$, and $b$ by Bob. From Born's rule, we can write
\begin{equation}\label{probs}
p(\mathbf{a},b|\mathbf{x})=\Tr\left[\left(\bigotimes_{i=1}^{N} M_{a_i|x_i} \otimes E_b\right)\bigotimes_N\rho_{A_iB_i}\right].
\end{equation}
Here we consider $\{E_b\}$ to be rank-one projective measurement. If $p(\mathbf{a},b|\mathbf{x})$ admits a local description, then it can be expressed as \cite{pironio1}
\begin{equation}\label{bilocal}
p(\mathbf{a},b|\mathbf{x})=\sum_{\lambda_1,\ldots,\lambda_N}\prod_{i=1}^Np(\lambda_i)p(a_i|x_i,\lambda_i)p(b|\lambda_1,\ldots,\lambda_N).
\end{equation}
A weaker assumption on this network has also been studied in \cite{Sarkar_2024}.

Consider now the standard multipartite Bell scenario in which a single source distributes states to all the external parties $A_i$. On their respective subsystems, they perform measurements with inputs $x_i$ and outputs $a_i$. They gather enough statistics to obtain $p(\mathbf{a}|\mathbf{x})$ using which they can evaluate a Bell functional $\mathcal{B}=\sum_{\mathbf{a},\mathbf{x}}c_{\mathbf{a},\mathbf{x}}p(\mathbf{a}|\mathbf{x})$ with $c_{\mathbf{a},\mathbf{x}}$ being real. The Bell inequality is thus given by $\mathcal{B}\leq\beta_{LHV}$ where $\beta_{LHV}$ is the maximal value attainable using the local hidden variable (LHV) model, that is, $p(\mathbf{a}|\mathbf{x})=\sum_{\lambda}\prod_{i=1}^Np(a_i|x_i,\lambda)p(\lambda)$. Now, we show that using this Bell inequality $\mathcal{B}$, we can construct a witness in the star network scenario to detect the entanglement in the quantum measurement $\{E_b\}$.

\begin{figure}[t]
\begin{center}
\includegraphics[width=.8\linewidth]{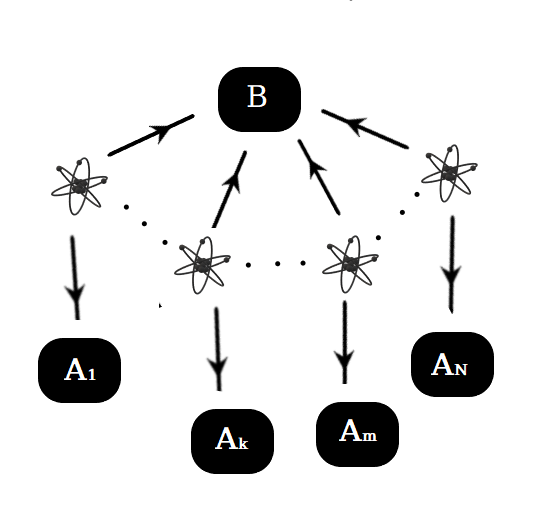}
    \caption{Star network scenario. There are $N$ external Alice's $A_i$ and Bob who are spatially separated. $A_i's$ receive one subsystem from their respective subsystem while Bob receives $N$ subsystems from all the sources. Bob performs a single joint measurement on his received subsystems while $A_i$ performs local measurements. Once the experiment is complete, they construct the joint probability distribution $\{p(\mathbf{a},b|\mathbf{x})\}$.}
    \label{fig3}
    \end{center}
\end{figure}

\begin{fakt} Consider a Bell inequality  $\mathcal{B}\leq\beta_{LHV}$ where $\mathcal{B}=\sum_{\mathbf{a},\mathbf{x}}c_{\mathbf{a},\mathbf{x}}p(\mathbf{a}|\mathbf{x})$ in the standard multipartite Bell scenario. 
Consider then the star network scenario (see Fig. \ref{fig3}) and the functional 
\begin{eqnarray}\label{DIentwit}
\mathbb{E}=\max_b\left[\sum_{\mathbf{a},\mathbf{x}}c_{\mathbf{a},\mathbf{x}}p(\mathbf{a},b|\mathbf{x})-\beta_{LHV}p(b)\right].
\end{eqnarray}
For correlations $\vec{p}$ admitting the local distribution \eqref{bilocal}, $\mathbb{E}\leq0$.
\end{fakt}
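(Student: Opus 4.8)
The plan is to substitute the local model \eqref{bilocal} directly into the functional $\mathbb{E}$ and show that, for every fixed outcome $b$ of Bob, the bracketed quantity is nonpositive; the maximum over $b$ then inherits the bound. First I would fix $b$ and plug the decomposition $p(\mathbf{a},b|\mathbf{x})=\sum_{\boldsymbol\lambda}\prod_i p(\lambda_i)\,p(a_i|x_i,\lambda_i)\,p(b|\boldsymbol\lambda)$ into $\sum_{\mathbf{a},\mathbf{x}}c_{\mathbf{a},\mathbf{x}}p(\mathbf{a},b|\mathbf{x})$, where $\boldsymbol\lambda\equiv(\lambda_1,\ldots,\lambda_N)$. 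Since the coefficients $c_{\mathbf{a},\mathbf{x}}$ only couple the $a_i$'s (not $b$), I can pull the sum over $\mathbf{a},\mathbf{x}$ inside and recognize that for each fixed $\boldsymbol\lambda$ the factor $\sum_{\mathbf{a},\mathbf{x}}c_{\mathbf{a},\mathbf{x}}\prod_i p(a_i|x_i,\lambda_i)$ is exactly the value of the Bell functional $\mathcal{B}$ evaluated on the deterministic-type local strategy indexed by $\boldsymbol\lambda$, hence is bounded above by $\beta_{LHV}$ by hypothesis.

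The second step is to carry this bound through the weighting. Writing $q(\boldsymbol\lambda,b)\equiv\prod_i p(\lambda_i)\,p(b|\boldsymbol\lambda)\geq0$, the first term of the bracket equals $\sum_{\boldsymbol\lambda} q(\boldsymbol\lambda,b)\,\mathcal{B}(\boldsymbol\lambda) \leq \beta_{LHV}\sum_{\boldsymbol\lambda} q(\boldsymbol\lambda,b)$. But $\sum_{\boldsymbol\lambda}\prod_i p(\lambda_i)\,p(b|\boldsymbol\lambda)$ is precisely the marginal $p(b)$ obtained from \eqref{bilocal} by summing out all the $a_i$ (using $\sum_{a_i}p(a_i|x_i,\lambda_i)=1$). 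Therefore the first term is at most $\beta_{LHV}\,p(b)$, so the bracket is $\leq 0$ for this $b$. Taking the maximum over $b$ preserves the inequality, giving $\mathbb{E}\leq 0$.

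One subtlety to handle carefully is the sign and the normalization of $\beta_{LHV}$: the argument as stated needs $\beta_{LHV}\geq0$ so that multiplying the inequality $\sum_{\boldsymbol\lambda}q(\boldsymbol\lambda,b)\mathcal{B}(\boldsymbol\lambda)\le\beta_{LHV}\sum q$ by the nonnegative total weight is valid — this is automatic for any nontrivial Bell functional since the LHV value is a maximum over a set containing strategies that make $\mathcal{B}$ as large as we normalize it to be, and one may always add a constant to $\mathcal{B}$ to arrange $\beta_{LHV}\ge 0$ without changing the inequality $\mathbb{E}\le0$. The main (minor) obstacle is bookkeeping: making sure that the Bell functional evaluated at the ``effective'' single-source strategy $p(a_i|x_i,\lambda_i)$ is genuinely an LHV point of the standard multipartite Bell scenario — it is, because one can take the shared hidden variable of that scenario to be $\boldsymbol\lambda$ itself with response functions $p(a_i|x_i,\lambda_i)$ — and that the marginal computation $\sum_{\mathbf{a}}p(\mathbf{a},b|\mathbf{x})=p(b)$ is independent of $\mathbf{x}$, which holds because each $\sum_{a_i}p(a_i|x_i,\lambda_i)=1$. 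No deep step is required beyond these observations.
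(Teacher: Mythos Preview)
Your proof is correct and follows essentially the same route as the paper: substitute the local model \eqref{bilocal}, recognise that for each fixed $\boldsymbol\lambda$ the Alices' responses define a valid local strategy in the standard Bell scenario so the Bell functional is at most $\beta_{LHV}$, and then average with the nonnegative weights $q(\boldsymbol\lambda,b)=\prod_i p(\lambda_i)\,p(b|\boldsymbol\lambda)$ whose sum over $\boldsymbol\lambda$ is exactly $p(b)$. The only cosmetic difference is that the paper rewrites $p(\boldsymbol\lambda)p(b|\boldsymbol\lambda)=p(b)p(\boldsymbol\lambda|b)$ via Bayes' rule before invoking the LHV bound, whereas you apply the pointwise bound directly and then sum; your version is marginally cleaner. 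One small remark: your caveat about needing $\beta_{LHV}\geq 0$ is unnecessary, since the step $\sum_{\boldsymbol\lambda} q(\boldsymbol\lambda,b)\,\mathcal{B}(\boldsymbol\lambda)\leq \beta_{LHV}\sum_{\boldsymbol\lambda} q(\boldsymbol\lambda,b)$ follows from $q\geq 0$ and $\mathcal{B}(\boldsymbol\lambda)\leq\beta_{LHV}$ alone, irrespective of the sign of $\beta_{LHV}$.
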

The proof of the above fact is stated in the Appendix. Consider now that the sources produce the state $\ket{\psi_{A_iB_i}}=\ket{\phi^+_d}$ and Bob performs the measurement $\{E_b\}$ to obtain the output $b$, then as discussed above the post-measurement state with the external parties is $E_b$ with a probability $1/d^N$. If $\mathcal{B}$ is violated with the state $E_b$, then $\mathbb{E}$ \eqref{DIentwit} will be violated using the measurement element $E_b$ and thus $\mathbb{E}>0$. 

It is well-known from Gisin's theorem \cite{GISIN1991201} that every pure bipartite entangled state violates the Clauser-Horne-Shimony-Holt (CHSH) \cite{CHSH} inequality. Consequently, every rank-one projective bipartite entangled measurement will violate the CHSH inequality modified as \eqref{DIentwit} in the star network with $N=2$ also known as the bilocality scenario. Again following \cite{POPESCU1992293, Gachechiladze_2017} similar conclusion can be arrived for any rank-one projective entangled measurement. Consequently, any rank-one projective entangled measurement will generate network non-local correlations and can be detected in a DI manner. Let us remark here that one can also detect higher rank projective entangled measurements or entangled POVM's by using the above procedure if a measurement element, which will now correspond to a mixed state violates some Bell inequality.

{\it{Discussions---}} In this work, we found that similar to detecting entangled quantum states, entanglement in any quantum measurement can also be detected without trusting them. In particular, we proposed an entanglement witness of the Bell-basis measurement that only requires to produce $\{\proj{0},\proj{1},\proj{+},\proj{-}\}$ which is simple to implement. For instance, in a polarisation-based setup, one has to simply rotate the polarisation of a single state $\ket{0}$ to prepare all the other states. We then extended this idea to witness entanglement in quantum measurements in a 1SDI way. For this purpose, we utilised the swap-steering scenario which is the minimal scenario to witness any form of network nonlocality without inputs. Finally, using the star network and Gisin's theorem, we constructed witnesses to detect entanglement in any rank-one projective quantum measurement in a DI manner. 

Several interesting problems follow-up from this work. The first one concerns finding optimal witnesses for several other entangled measurements. A challenging problem in this direction will be to witness more structures in the quantum measurement. For instance, is it possible to detect that every measurement element is entangled or not. Moreover, for quantum measurements that can act on more than two subsystems, it will be interesting to detect genuine entanglement in such measurements. Also, it will be interesting if other properties such as nonlocality without entanglement can be witnessed without trusting the measurement.

{\textit{Acknowledgements---}}
This project was funded within
the QuantERA II Programme (VERIqTAS project) that
has received funding from the European Union’s Horizon 2020 research and innovation programme under
Grant Agreement No 101017733.

\providecommand{\noopsort}[1]{}\providecommand{\singleletter}[1]{#1}%

\onecolumngrid
\appendix

\section{Proofs of the facts}

\setcounter{fakt}{1}
\begin{fakt}
    Consider the swap-steering scenario in Fig. \ref{fig2} and the functional $\mathcal{S}_{\{E_b\}}$ \eqref{Wituniv}. The maximal value attainable of $\mathcal{S}_{\{E_b\}}$ using an SOHS model is $\beta_{SOHS}=0$. 
\end{fakt}
\begin{proof}
    Let us recall that for correlations admitting a SOHS model, we have that
    \begin{equation}
        p(0,b|i_1\ldots i_N)= \sum_{\lambda_1,\ldots,\lambda_N}p(\lambda_1)\ldots p(\lambda_N)\Tr{[M_{0}\rho_{\lambda_1}\otimes\ldots\rho_{\lambda_N}]} p(b|\lambda_1,\ldots,\lambda_N)
    \end{equation}
for all $b$. Consequently, we have from \eqref{Wituniv} that
\begin{eqnarray}\label{A21}
    \mathcal{S}_{\{E_b\}}&=&\max_b\sum_{\lambda_1,\ldots,\lambda_N}p(\lambda_1)\ldots p(\lambda_N)\Gamma(\rho_{\lambda_1}\otimes\ldots\rho_{\lambda_N}) p(b|\lambda_1,\ldots,\lambda_N)\ \ 
\end{eqnarray}
where 
\begin{eqnarray}\label{gamma11}
    \Gamma(\rho_{\lambda_1}\otimes\rho_{\lambda_n})=\sum_{i_1,\ldots i_N}\beta_{i_1,\ldots i_N}p(0|i_1\ldots i_N,\rho_{\lambda_1}\otimes\ldots\rho_{\lambda_N}).
\end{eqnarray}
Expanding the above formula \eqref{gamma11}, by recalling Alice's measurements $\mathcal{A}_{i_1\ldots i_N}$, we obtain
\begin{eqnarray}
    \Gamma(\rho_{\lambda_1}\otimes\ldots\rho_{\lambda_N})&=& \sum_{i_1,\ldots i_N}\beta_{i_1,\ldots i_N}\Tr(\tau_{i_1}\otimes\ldots\tau_{i_N} \rho_{\lambda_1}\otimes\ldots\rho_{\lambda_N}) \nonumber\\&=&-\Tr(\mathcal{W}\rho_{\lambda_1}\otimes\ldots\rho_{\lambda_N})\leq 0.
\end{eqnarray}
Thus, we have from \eqref{A21} that for correlations admitting a SOHS model
\begin{eqnarray}
    \mathcal{S}_{\{E_b\}}\leq0.
\end{eqnarray}
This concludes the proof.
\end{proof}

\begin{fakt} Consider a Bell inequality  $\mathcal{B}\leq\beta_{LHV}$ where $\mathcal{B}=\sum_{\mathbf{a},\mathbf{x}}c_{\mathbf{a},\mathbf{x}}p(\mathbf{a}|\mathbf{x})$ in the standard multipartite Bell scenario. 
Consider then the star network scenario (see Fig. \ref{fig3}) and the functional 
\begin{eqnarray}
\mathbb{E}=\max_b\left[\sum_{\mathbf{a},\mathbf{x}}c_{\mathbf{a},\mathbf{x}}p(\mathbf{a},b|\mathbf{x})-\beta_{LHV}p(b)\right].
\end{eqnarray}
For correlations $\vec{p}$ admitting the local distribution \eqref{bilocal}, $\mathbb{E}\leq0$.
\end{fakt}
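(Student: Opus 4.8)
The plan is to fix Bob's outcome $b$, substitute the local decomposition \eqref{bilocal} into the bracketed quantity, and then exploit the independence of the sources to reduce the statement to the assumed Bell inequality applied \emph{conditionally on each hidden-variable configuration}.

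First, for a fixed $b$, I would plug \eqref{bilocal} into $\sum_{\mathbf{a},\mathbf{x}}c_{\mathbf{a},\mathbf{x}}p(\mathbf{a},b|\mathbf{x})$ and interchange the (finite) summations, rewriting it as
\[
\sum_{\lambda_1,\ldots,\lambda_N}\Big(\prod_{i=1}^N p(\lambda_i)\Big)\,p(b|\lambda_1,\ldots,\lambda_N)\sum_{\mathbf{a},\mathbf{x}}c_{\mathbf{a},\mathbf{x}}\prod_{i=1}^N p(a_i|x_i,\lambda_i).
\]
The key observation is that, for each fixed tuple $(\lambda_1,\ldots,\lambda_N)$, the map $\mathbf{a}\mapsto\prod_{i=1}^N p(a_i|x_i,\lambda_i)$ is a legitimate product (hence local) response distribution in the standard multipartite Bell scenario, so the hypothesis $\mathcal{B}\le\beta_{LHV}$ gives $\sum_{\mathbf{a},\mathbf{x}}c_{\mathbf{a},\mathbf{x}}\prod_i p(a_i|x_i,\lambda_i)\le\beta_{LHV}$ for every such tuple.

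Next, since $\prod_i p(\lambda_i)\,p(b|\lambda_1,\ldots,\lambda_N)\ge 0$, I would multiply this inequality by that weight and sum over all the $\lambda_i$; reweighting and summing an inequality with nonnegative coefficients preserves its direction, irrespective of the sign of $\beta_{LHV}$. It remains to identify $\sum_{\lambda_1,\ldots,\lambda_N}\prod_i p(\lambda_i)\,p(b|\lambda_1,\ldots,\lambda_N)$ with $p(b)$, which follows by summing \eqref{bilocal} over $\mathbf{a}$ at any fixed $\mathbf{x}$ and using $\sum_{a_i}p(a_i|x_i,\lambda_i)=1$ (equivalently, the no-signalling-to-Bob marginal). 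Thus $\sum_{\mathbf{a},\mathbf{x}}c_{\mathbf{a},\mathbf{x}}p(\mathbf{a},b|\mathbf{x})-\beta_{LHV}p(b)\le 0$ for each $b$, and taking the maximum over $b$ yields $\mathbb{E}\le 0$.

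There is no genuine obstacle here; the only point that must be handled with care is that the Bell bound has to be invoked at the level of each individual configuration $(\lambda_1,\ldots,\lambda_N)$, not at the level of the averaged conditional distribution $p(\mathbf{a}|\mathbf{x},b)$, which need not factorise and whose value under $\mathcal{B}$ is a priori uncontrolled. The sign-agnostic reweighting remark and the Bob-marginal normalisation are the two small bookkeeping facts that make the argument go through cleanly.
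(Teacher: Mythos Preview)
Your proof is correct and follows essentially the same route as the paper: substitute the local model \eqref{bilocal}, recognise that for each hidden-variable configuration the Alices' responses form a local strategy bounded by $\beta_{LHV}$, and identify the total weight $\sum_{\lambda_1,\ldots,\lambda_N}\prod_i p(\lambda_i)\,p(b|\lambda_1,\ldots,\lambda_N)$ with $p(b)$. The only cosmetic difference is that the paper first uses Bayes' rule to factor out $p(b)$ and then applies the Bell bound to the resulting LHV mixture $p(\mathbf{a}|\mathbf{x},b)$---which, contrary to your closing caution, \emph{is} controlled (it is a convex combination of product strategies and hence still LHV)---whereas you bound pointwise and then average; the two orderings are equivalent.
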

\begin{proof}
    For any correlation admitting a local distribution \eqref{bilocal}, $\mathbb{E}$ can be expressed as
    \begin{eqnarray}\label{A6}
\mathbb{E}=\max_b\left[\sum_{\mathbf{a},\mathbf{x}}c_{\mathbf{a},\mathbf{x}}\sum_{\lambda_1,\ldots,\lambda_N}\prod_{i=1}^Np(\lambda_i)p(a_i|x_i,\lambda_i)p(b|\lambda_1,\ldots,\lambda_N)-\beta_{LHV}p(b)\right].
    \end{eqnarray}
By representing $\prod_{i=1}^Np(\lambda_i)\equiv p(\lambda_1,\ldots,\lambda_N)$, we obtain using Bayes rule that 
\begin{equation}
    p(b|\lambda_1,\ldots,\lambda_N)p(\lambda_1,\ldots,\lambda_N)=p(b)p(\lambda_1,\ldots,\lambda_N|b).
\end{equation}
Consequently, we obtain from \eqref{A6} that
\begin{eqnarray}
\mathbb{E}=\max_b\left[\sum_{\mathbf{a},\mathbf{x}}c_{\mathbf{a},\mathbf{x}}\sum_{\lambda_1,\ldots,\lambda_N}\prod_{i=1}^Np(a_i|x_i,\lambda_i)p(\lambda_1,\ldots,\lambda_N|b)-\beta_{LHV}\right]p(b).
    \end{eqnarray}
As $\lambda_1,\ldots,\lambda_N$ depending on $b$ represents another set of variables $\lambda'$ which in general might not be independent, we obtain that
\begin{eqnarray}
\mathbb{E}=\max_b\left[\sum_{\mathbf{a},\mathbf{x}}c_{\mathbf{a},\mathbf{x}}\sum_{\lambda'}\prod_{i=1}^Np(a_i|x_i,\lambda')p(\lambda')-\beta_{LHV}\right]p(b).
    \end{eqnarray}
    As the first term in the above formula for correlations admitting the LHV model in the standard multipartite Bell scenario, we obtain that $\mathbb{E}\leq0$. This completes the proof.
\end{proof}

\end{document}